\DeclareMathOperator*{\argmax}{argmax}
\DeclareMathOperator*{\orient}{orient}
\DeclareMathOperator*{\reduce}{reduce}
\DeclareMathOperator*{\op}{op}
\DeclareMathOperator*{\frt}{frt}
\DeclareMathOperator*{\rt}{rt}
\newcounter{Halgorithmic}
\renewcommand{theHALG@line}{\arabic{Halgorithmic}.\arabic{ALG@line}}
\numberwithin{algorithm}{section}
\numberwithin{equation}{section}
\numberwithin{figure}{section}
\newenvironment{proofs}{%
  \proof}{\endproof}
\newtheorem{hyp}{Hypothesis}
\newtheorem{theorem}{Theorem}
\newtheorem{lemma}{Lemma}
\theoremstyle{definition}
\newtheorem{example}{Example}
\pgfplotsset{compat=1.7}
\title{Quickhull is Usually Forward Stable}
\author{Thomas Koopman \and Sven-Bodo Scholz}
\begin{document}

\maketitle

\begin{abstract}
    Quickhull is an algorithm for computing the convex hull of points
    in a plane that performs well in practice, but has poor complexity
    on adversarial input. In this paper we show the same holds for the
    numerical stability of Quickhull.
\end{abstract}

\section{Introduction}

The convex hull $CH(P)$ of a bounded set $P \subseteq \mathbb{R}^2$ is the 
smallest set that contains $P$ and is closed under taking convex combinations.
On a computer, we can only deal with finite sets $P$ and the convex hull
of a finite set is always a polygon. The planar convex hull problem is to list
the vertices of this polygon in clockwise (or counter-clockwise) order.
By abuse of notation, we will write $|CH(P)|$ for the number of vertices
on the convex hull.

The goal in algorithm design is to compute a good solution quickly. There are
many convex hull algorithms to choose from, differing in worst-case runtime
complexity and practical performance. The Quickhull algorithm is most performant
on many data sets, usually running in $O(|P| \log |CH(P)|)$, despite a
worst-case complexity of $O(|P|^2)$. The goal is to compute a good solution and 
not \textbf{the} solution because the convex hull is defined on values in 
$\mathbb{R}$.
This introduces uncertainties arising from measurements and rounding errors. 
Dealing with these uncertainties is a well-established field in mathematics, 
but has not been applied to Quickhull yet. This paper seeks to close that gap.

Using some jargon---which we explain assuming no prior knowledge in 
Subsection~\ref{subsec:robustness}---this paper provides the following 
contributions.

\begin{itemize}
    \item We define a measure of inaccuracy for the convex hull problem and do
          the corresponding perturbation analysis.
    \item We show that the Quickhull algorithm has a forward error bound of 
          $O(|P|^2 \epsilon)$ for $\epsilon$ the machine precision.
    \item We give a probabilistic argument that in practice the error bound is 
          more likely to grow proportionally to $\log(|CH(P)|)\epsilon$.
    \item We propose two variations of Quickhull that reduce the worst-case 
          error bound to
          $O(\sqrt{|P|}\log(|P|)\epsilon)$ or $O((\log |P|)^2\epsilon)$ for 
          cases where the runtime of Quickhull is $O(|P| \log |P|)$.
\end{itemize}

\section{Background}

\subsection{Quickhull}

The Quickhull algorithm \cite{Barber96} makes use of two geometric observations
illustrated by Figure~\ref{fig:quickhull_corr}. First, if $p$, $q$ are in the 
convex hull, then any point $r$ with maximum distance to the line through $p$
and $q$ is in the convex hull as well. Second, any point within the 
triangle $\Delta prq$ is not in the convex hull. So if the points $prq$ are 
listed in clockwise order, we only need to consider the points 
to the left of the oriented line segment $pr$ and to the left of the oriented
line segment $rq$. This idea can be applied recursively to obtain 
Algorithm~\ref{alg:quickhull_corr}.

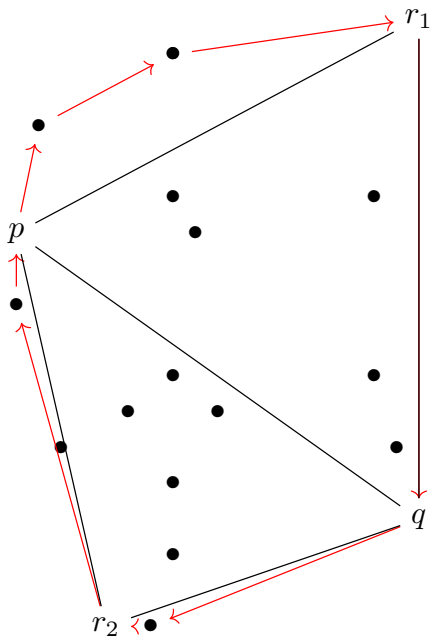
\begin{figure}[ht]
    \resizebox{0.5 \columnwidth}{!}{%
        \begin{tikzpicture}
            \node (6_2) at (1.500, 0.800) {$\bullet$};
            \node (16_9) at (4.000, 3.600) {$\bullet$};
            \node (7_9) at (1.750, 3.600) {$\bullet$};
            \node (7_18) at (1.750, 7.200) {$\bullet$};
            \node (9_8) at (2.250, 3.200) {$\bullet$};
            \node (16_14) at (4.000, 5.600) {$\bullet$};
            \node (p) at (0.000, 5.200) {$p$};
            \node (2_7) at (0.500, 2.800) {$\bullet$};
            \node (r2) at (1.000, 0.800) {$r_2$};
            \node (7_14) at (1.750, 5.600) {$\bullet$};
            \node (8_13) at (2.000, 5.200) {$\bullet$};
            \node (0_11) at (0.000, 4.400) {$\bullet$};
            \node (r1) at (4.500, 7.600) {$r_1$};
            \node (7_6) at (1.750, 2.400) {$\bullet$};
            \node (q) at (4.500, 2.000) {$q$};
            \node (1_16) at (0.250, 6.400) {$\bullet$};
            \node (7_18) at (1.750, 7.200) {$\bullet$};
            \node (17_7) at (4.250, 2.800) {$\bullet$};
            \node (7_4) at (1.750, 1.600) {$\bullet$};
            \node (5_8) at (1.250, 3.200) {$\bullet$};

            \graph {(0_11) ->[red] (p)};
            \graph {(p) ->[red] (1_16)};
            \graph {(1_16) ->[red] (7_18)};
            \graph {(7_18) ->[red] (r1)};
            \graph {(r1) ->[red] (q)};
            \graph {(q) ->[red] (6_2)};
            \graph {(6_2) ->[red] (r2)};
            \graph {(r2) ->[red] (0_11)};
            \graph {(p) --[black] (q)};
            \graph {(p) --[black] (r1)};
            \graph {(r1) --[black] (q)};
            \graph {(q) --[black] (r2)};
            \graph {(r2) --[black] (p)};
        \end{tikzpicture}
    }
    \caption{The points $p$ and $q$ are the left-most and 
             right-most points, and therefore in the hull. The points $r_1$,
             $r_2$ are farthest from the line $pq$, and also in the hull. Any
             point within the two triangles $\Delta pr_1 q$, $\Delta q r_2 p$
             cannot be in the convex hull.}
    \label{fig:quickhull_corr}
\end{figure}

\begin{algorithm}[ht]
\begin{algorithmic}[1]
    \Require $P: $ set of points in $\mathbb{R}^2$.
    \Ensure $H: $ vertices of convex hull of $P$ in clockwise order.
    \State Let $p$ be the left-most point
    \State Let $q$ the right-most point
    \State $S_1 := \{u \in P \mid u \text{ to the left of }pq\}$
    \State $r_1 := \argmax\limits_{u \in S_1}d(u, pq)$.
    \State $S_2 := \{u \in P \mid u \text{ to the right of }pq\}$
    \State $r_2 := \argmax\limits_{u \in S_2}d(u, pq)$.
    \State $H := \{p\} \cup $ \Call{Hull}{$S_1$, $p$, $r_1$, $q$}
            $\cup \{q\} \cup$ \Call{Hull}{$S_2$, $q$, $r_2$, $p$}.
    \Function{Hull}{P, p, r, q}
        \If{$|P| \leq 1$}
            \State $H := P$.
        \Else{}
            \State $S_1 := \{u \in P \mid u \text{ to the left of }pr\}$
            \State $r_1 := \argmax\limits_{u \in S_1}d(u, pr)$.
            \State $S_2 := \{u \in P \mid u \text{ to the left of }rq\}$
            \State $r_2 := \argmax\limits_{u \in S_2}d(u, rq)$.
            \State $H := \{p\} \cup$ \Call{Hull}{$S_1$, $p$, $r_1$, $r$}
                   $\cup \{r\} \cup$ \Call{Hull}{$S_2$, $r$, $r_2$, $q$} 
                   $\cup \{q\}$.
        \EndIf
    \EndFunction
\end{algorithmic}
\captionof{algorithm}{Quickhull algorithm.}
\label{alg:quickhull_corr}
\end{algorithm}

While it is intuitively clear what points are to the left of a line $pq$, or
inside a triangle $prq$, we need a quantity called the orientation of three 
points to define this precisely.
Given three points $p, u, q$ in the plane, we have vectors
\[
    \vec{up} = \begin{pmatrix}
        p_x - u_x \\
        p_y - u_y
    \end{pmatrix},
    \vec{uq} = \begin{pmatrix}
        q_x - u_x \\
        q_y - u_y
    \end{pmatrix}
\]
representing oriented line segments between the points, as depicted in
Figure~\ref{fig:orient1}.
\begin{figure}[ht]
    \begin{subfigure}{0.45\textwidth}
    \begin{tikzpicture}
        \node (p) at (0, 0) {$p$};
        \node (u) at (1, 2) {$u$};
        \node (q) at (2, 1) {$q$};

        \graph {(u) ->["$\vec{up}$"'] (p)};
        \graph {(u) ->["$\vec{uq}$" ] (q)};
    \end{tikzpicture}
    \subcaption{Two vectors induced by three points in the plane. The path 
             $p, u, q$ makes a right turn.}
    \label{fig:orient1}
    \end{subfigure}\hfill
    \begin{subfigure}{0.45\textwidth}
    \begin{tikzpicture}
        \node (p) at (0, 0)  {$p$};
        \node (u) at (1, 2)  {$u$};
        \node (q) at (2, 1)  {$q$};
        \node (c) at (1, -1) {};

        \graph {(u) ->["$\vec{up}$"'] (p)};
        \graph {(u) ->["$\vec{uq}$" ] (q)};
        \graph {(p) --                (c)};
        \graph {(c) --                (q)};
    \end{tikzpicture}
    \subcaption{Parallelogram spanned by $\vec{up}$ and $\vec{uq}$. 
             The area of this parallelogram is $|\orient(p, u, q)|$.}
    \label{fig:orient2}
    \end{subfigure}
    \caption{Vectors between points in the plane.}
\end{figure}
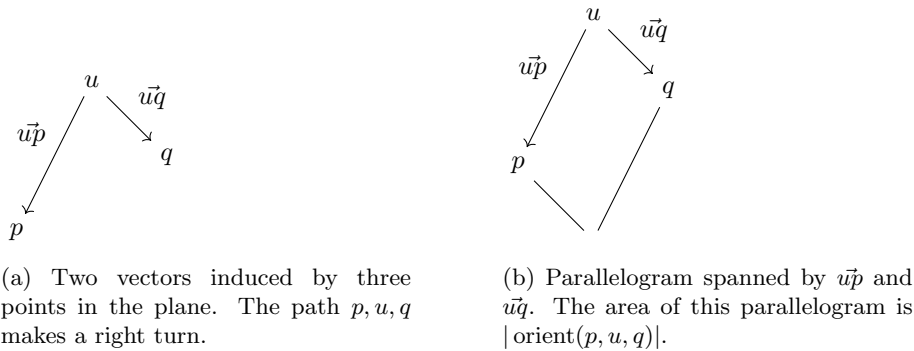

We can view these as vectors in $\mathbb{R}^3$ by adding a $z$-coordinate $0$. 
Now the cross product $\vec{up} \times \vec{uq}$ is equal to 

\[
    \begin{pmatrix}
        0 \\
        0 \\
        (p_x - u_x) \cdot (q_y - u_y) - (p_y - u_y) \cdot (q_x - u_x)
    \end{pmatrix}.
\]

We write $\orient(p, u, q)$ for the $z$-coordinate
$(p_x - u_x) \cdot (q_y - u_y) - (p_y - u_y) \cdot (q_x - u_x)$
If the cross-product points up, so if $\orient(p, u, q) > 0$, we make a 
right turn going from $p$ to $u$ to $q$, and if this is negative, we make a 
left turn. If it is zero, the points $p, u, q$ are collinear. 
This gives us the precise definition of what it means for $u$ to be to the left
of $pq$: $\orient(p, u, q) > 0$.


We can also use the orientation to find the point farthest from the line segment 
$pq$~\cite{quickerthanqhull}. 
The norm of the outer product
$|\vec{up} \times \vec{uq}| = |\orient(p, u, q)|$ is the area of the
parallelogram spanned by $\vec{up}$ and $\vec{uq}$. This is also equal to
the length of $pq$ multiplied by the distance from $u$ to the line through 
$pq$. So $|\orient(p, u, q)| > |\orient(p, u', q)|$ if and only if $u$ is 
farther from $pq$ than $u'$.

\subsection{Numerical Robustness}\label{subsec:robustness}

It is unfeasible to compute exact solutions to real-valued functions because
the input is imprecise. This imprecision can arise from the rounding of values
in $\mathbb{R}$ to a finite data type \texttt{double} or \texttt{float}, but
also from imprecision in measuring physical quantities, or truncation errors
in approximating continuous functions with discrete versions.

Instead of aiming to compute the correct result, we aim to find bounds on
the error.
The error between the true solution $f(x)$ of a problem $x$, and the computed 
solution $\widehat{f(x)}$ is called the \textit{forward error}. 
We may also ask ourselves: for what $\tilde{x}$ is our computed solution 
$\widehat{f(x)}$ the actual solution $f(\tilde{x})$? We call the error between 
$x$ and $\tilde{x}$ the \textit{backward error}. If the backward
error is smaller than the uncertainty in the input, the solution is as good as 
we can expect.

We can connect these two ideas together with a
\textit{perturbation analysis}:
how is the error between $x$ and $\tilde{x}$ in the input magnified to an error
in the output between $f(x)$ and $f(\tilde{x})$?
If the error in the output has the same or smaller magnitude we call the problem 
\textit{well-conditioned}, if the error in the output is substantially 
larger we call the problem \textit{ill-conditioned}. For a well-conditioned 
problem, a small backward error also implies a small forward error.

Algorithms whose forward or backward error grows slowly with respect to the
problem size are called \textit{numerically stable}.

\begin{example}\label{ex:cancellation}
    Subtraction is a classic example, and will also be useful for our analysis.
    For non-zero numbers in $\mathbb{R}$, we define the relative error of 
    $\hat{x}$ approximating $x$ as the quantity $|x - \hat{x}| / |x|$, or 
    equivalently $|\delta|$ for $\hat{x} = (1 + \delta)x$. The computed 
    difference $fl(a - b)$ of floating point numbers $a$, $b$ satisfies 
    $fl(a - b) = (1 + \delta)(a - b)$, where $\delta$ is smaller than the
    machine precision $\epsilon$, $2^{-24}$ for single-precision and 
    $2^{-53}$ for double-precision. This means we have a forward error bounded 
    by $\epsilon$, and by rewriting to 
    $fl(a - b) = (1 + \epsilon)a - (1 + \epsilon)b$, also the 
    backward error is bounded by $\epsilon$. For the perturbation analysis, we 
    subtract two perturbed values $\hat{a} = (1 + \delta_a)a$ and 
    $\hat{b} = (1 + \delta_b)b$ and compute the relative error
    $$\left|\frac{(a - b) - (\hat{a} - \hat{b})}{(a - b)}\right| = 
    \frac{|\delta_a a + \delta_b b|}{|a - b|} \leq 
    \max(|\delta_a|, |\delta_b|) \frac{|a| + |b|}{|a - b|}.$$
    
    We see that subtraction is ill-conditioned when $a \approx b$: small errors 
    $\delta_a$, $\delta_b$ in the input are magnified by the subtraction. The 
    reader may recognize this as catastrophic cancellation. Note that this is 
    only problematic if $a$, $b$ have been contaminated by error.
\end{example}

\section{Perturbation Analysis}

To bound an error, we first need to define how far a computed solution is
from the true solution. As we work with geometric shapes, it is intuitive
to define error in terms of distance. The standard metric is the Hausdorff 
distance 
$d(A, B) := \max\left(\max_{a \in A}d(a, B), \max_{b \in B}d(A, b)\right)$. 
This has as unfortunate property that the distance is not invariant to scaling:
we have $d(\{(1.00, 1.00)\}, \{(2.00, 2.00)\}) = \sqrt{2}$, but 
$d(\{(100, 100)\}, \{(200, 200)\}) = 100 \sqrt{2}$, while these may very well
be the same data, one expressed in metres, and the other in centimetres.
If we have a reference set---in our case the input---we can divide the Hausdorff
distance by the maximum absolute value of $x$- and $y$-coordinates $M$, 
yielding $d_M(A, B) := d(A, B) / M$. This makes the measure of error independent
of the chosen unit. The resulting function $d_M$ satisfies the axioms of a 
distance.

We can now do the perturbation analysis with respect to this error. We write
$\delta = d_M(\tilde{P}, P)$ for the error between $P$ and perturbed set 
$\tilde{P}$. To bound the Hausdorff distance we take a point $a \in CH(P)$. 
We can write this as a convex combination 
$a = \sum_{i = 1}^n \lambda_i a_i$ for $a_i \in P$
and $\sum_{i = 1}^n \lambda_i = 1$, $\lambda_i \geq 0$. 
By definition of the Hausdorff distance, there exist $b_i \in \tilde{P}$
such that $d(a_i, b_i) \leq M \delta$. This gives a point 
$b = \sum_{i = 1}^n \lambda_i b_i$ that is on the convex hull of $\tilde{P}$,
and close to $a$. The triangle inequality bounds the distance
\[
d(a, b) = |a - b| \leq \sum_{i = 1}^n \lambda_i |a_i - b_i| 
\leq \sum_{i = 1}^n \lambda_i M\delta = \delta M.
\]
Therefore $d(CH(P), CH(\tilde{P})) \leq \delta M$, so 
$d_M(CH(P), CH(\tilde{P})) \leq \delta = d_M(P, \tilde{P})$. 
As the backward error 
leads to a forward error of the same magnitude, we can consider the convex hull
problem well-conditioned.

\section{Geometric Tests}

As preparation for the full analysis, we first consider the steps of 
Algorithm~\ref{alg:quickhull_corr} in isolation.
Recall from the background section that we can decide whether $puq$ makes a 
right turn by testing
$\orient(p, u, q) > 0$. To reduce the number of floating point operations 
necessary to evaluate this in a loop over $u$, we rewrite the inequality
to $(p_x - u_x) \cdot (q_y - p_y) > (p_y - u_y) \cdot (q_x - p_x)$. The
subtractions $q_y - p_y$ and $q_x - p_x$ can be computed once and reused every
iteration. We denote the true evaluation of this inequality by $\rt(p, u, q)$ 
and the computed evaluation by $\widehat{\rt}(p, u, q)$.

We can test whether $u$ is farther from $pq$ than $u'$ by comparing
$\orient(p, u, q)$ to $\orient(p, u', q)$. This makes it tempting
to compute the orientation once for each point, and then do both
the right-turn test and the distance test with this quantity. However, this
can run into precision problems because the middle subtraction in
$(u_x - p_x)(q_y - u_y) \bm{-} (u_y - p_y)(q_x - u_x)$
takes arguments that are polluted by error and this might incur
numerical cancellation (Example~\ref{ex:cancellation}).
Instead, we rewrite the inequality
\begin{align*}
    (p_x - u_x)(q_y - u_y) &\bm{-} (p_y - u_y)(q_x - u_x) < \\
    (p'_x - u_x)(q_y - u'_y) &\bm{-} (p'_y - u_y)(q_x - u'_x)
\end{align*}
to
\begin{align*}
(q_y - p_y) (u_x - u'_x) < (q_x - p_x) (u_y - u'_y).
\end{align*}
This eliminates any subtractions on computed values.
We write $\frt(p, q, u, u')$ for the predicate $d(pq, u) > d(pq, u')$,
and $\widehat{\frt}(p, q, u, u')$ for the computed predicate.

Finding the point farthest from $pq$ is a reduction
\[
    u_1 \op \cdots \op u_2
\]
where $u_1 \op u_2$ returns $u_1$ if $\frt(p, q, u_1, u_2)$ and $u_2$ 
otherwise. Mathematically this gives a point with maximum distance to $pq$
regardless of how we order the operations. However, this is not true for the 
floating point evaluation $\widehat{\frt}$. We write this reduction as
$\reduce \widehat{\frt}(p, q, -, -) X$ for some choice evaluation order
over points in $X$. We will analyse some choices in 
Subsection~\ref{subsec:farther}.

Taken together, this gives Algorithm~\ref{alg:quickhull_float}.

\begin{algorithm}[ht]
\begin{algorithmic}[1]
    \Require $P: $ set of points in $\mathbb{R}^2$.
    \Ensure $H: $ vertices of convex hull of $P$ in clockwise order.
    \State Let $p$ be the left-most point
    \State Let $q$ the right-most point
    \State $S_1 := \{u \in P \mid \widehat{rt}(p, u, q)\}$.
    \State $r_1 := \reduce \widehat{\frt}(p, q, -, -) S_1$.
    \State $S_2 := \{u \in P \mid \widehat{rt}(q, u, p)\}$.
    \State $r_2 := \reduce \widehat{\frt}(q, p, -, -) S_2$.
    \State $H := \{p\} \cup $ \Call{Hull}{$S_1$, $p$, $r_1$, $q$}
            $\cup \{q\} \cup$ \Call{Hull}{$S_2$, $q$, $r_2$, $p$}.
    \Function{Hull}{$P, p, r, q$}
        \If{$|P| \leq 1$}
            \State $H := P$.
        \Else{}
            \State $S_1 := \{u \in P \mid \widehat{\rt}(p, u, r)\}$.
            \State $r_1 := \reduce \widehat{\frt}(p, r, -, -) S_1$.
            \State $S_2 := \{u \in P \mid \widehat{\rt}(r, u, q)\}$
            \State $r_2 := \reduce \widehat{\frt}(r, q, -, -) S_2$.
            \State $H := \{p\} \cup$ \Call{Hull}{$S_1$, $p$, $r_1$, $r$}
                   $\cup \{r\} \cup$ \Call{Hull}{$S_2$, $r$, $r_2$, $q$} 
                   $\cup \{q\}$.
        \EndIf
    \EndFunction
\end{algorithmic}
\captionof{algorithm}{Floating point Quickhull algorithm}
\label{alg:quickhull_float}
\end{algorithm}

\subsection{Right Turn}

Lemma~\ref{lem:right-turn} shows $\widehat{rt}(p, u, q)$ is accurate for points 
far from $pq$. This is a desirable property because we only misclassify points
that have little effect on result.

\begin{lemma}\label{lem:right-turn}
    If $\rt(p, u, q)$ is true, but $\widehat{\rt}(p, u, q)$ is not, or the
    other way around, then $d(u, pq) \leq \gamma_6 M$, where 
    $\gamma_6 = \frac{6\epsilon}{1 - 6\epsilon}$ for $\epsilon$ the machine 
    precision.
\end{lemma}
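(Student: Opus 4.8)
The plan is to track the floating-point rounding errors in the rewritten predicate and show that whenever the sign of the computed quantity disagrees with the true sign, the true quantity must be small — and that a small value of $(p_x - u_x)(q_y - p_y) - (p_y - u_y)(q_x - p_x)$ forces $u$ to be close to the line $pq$. The predicate $\widehat{\rt}(p,u,q)$ compares $A := (p_x - u_x)(q_y - p_y)$ against $B := (p_y - u_y)(q_x - p_x)$, i.e. it tests the sign of $A - B$, which up to sign equals $\orient(p,u,q)$. I would first count the floating-point operations needed to form the computed versions $\widehat{A}$, $\widehat{B}$: two subtractions to get $q_y - p_y$ and $q_x - p_x$ (shared across the loop), one subtraction each for $p_x - u_x$ and $p_y - u_y$, and one multiplication each — so each of $\widehat A$, $\widehat B$ is a product of two quantities each carrying one rounding error, plus the rounding of the multiplication, giving three rounding factors, hence $\widehat A = A\prod_{i=1}^{3}(1+\delta_i)$ with $|\delta_i| \le \epsilon$, and similarly for $\widehat B$. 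Using the standard $\gamma_n = n\epsilon/(1-n\epsilon)$ notation (Higham-style), this gives $\widehat A = A(1+\theta_A)$, $\widehat B = B(1+\theta_B)$ with $|\theta_A|, |\theta_B| \le \gamma_3$.

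Next I would argue about when the signs can disagree. The computed test returns the sign of $fl(\widehat A - \widehat B) = (\widehat A - \widehat B)(1+\delta)$ with $|\delta|\le\epsilon$; since multiplying by $(1+\delta)$ does not change the sign, disagreement between $\widehat{\rt}$ and $\rt$ means $\widehat A - \widehat B$ and $A - B$ have opposite signs (or one is zero). Writing $\widehat A - \widehat B = (A - B) + (\theta_A A - \theta_B B)$, a sign flip forces $|A - B| \le |\theta_A A - \theta_B B| \le \gamma_3(|A| + |B|)$. Now I need to bound $|A|$ and $|B|$ by something proportional to $M$ times $d(u,pq)$-type quantities — but that is not quite true pointwise, so instead I would relate $|A|+|B|$ back to $|A-B| = |\orient(p,u,q)|$ together with $M$. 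The cleaner route: combine the two inequalities to get $|A - B|(1 - \gamma_3) \le \gamma_3 \cdot 2\max(|A|,|B|)$, and bound $\max(|A|,|B|)$ by noting each factor $|p_x - u_x|, |q_y - p_y|, \ldots$ is at most $2M$ (coordinates bounded by $M$ in absolute value after scaling), so $\max(|A|,|B|) \le 4M^2$; this yields $|\orient(p,u,q)| = |A - B| \le \gamma_6 M^2$ roughly, after absorbing constants and the $(1-\gamma_3)$ denominator into a slightly larger $\gamma$.

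Finally I would convert the bound on $|\orient(p,u,q)|$ into a bound on the distance $d(u,pq)$. From the background section, $|\orient(p,u,q)| = |pq| \cdot d(u,pq)$ where $|pq|$ is the length of the segment. Since $p$ is the leftmost and $q$ the rightmost point — or, in the recursive calls, $p$ and $q$ are hull vertices chosen so that the relevant point set lies "between" them — one expects $|pq|$ to be comparably large; but the safe and unit-consistent statement is simply $d(u,pq) = |\orient(p,u,q)| / |pq| \le \gamma_6 M^2 / |pq|$, and since any two input points are at distance at most $2\sqrt 2 M$ apart this does not immediately give $\le \gamma_6 M$. The resolution I anticipate the authors use is that $|pq| \ge$ some constant fraction of $M$ in the relevant regime, or that the constant $6$ is chosen precisely so that $|\orient(p,u,q)| \le \gamma_6 M \cdot |pq|$ comes out directly from a tighter operation count (e.g. the three rounding factors and the bound $|p_x-u_x|\le |pq|$-type estimates combine to give $|A|, |B| \le M|pq|$ rather than $M^2$). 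So the main obstacle — and the step I would be most careful about — is getting the factors of $M$ versus $|pq|$ to line up so that the final bound reads $d(u,pq) \le \gamma_6 M$ rather than something scaled by $M^2/|pq|$; this requires bounding the individual multiplicands by $M$ and $|pq|$ respectively (using that $u$ lies in the slab between $p$ and $q$ so $|u_x - p_x| \le |q_x - p_x| \le |pq|$ and $|u_y - p_y| \le 2M$), after which the $\gamma_3$-per-product estimate, doubled for the two products and the final subtraction, produces exactly $\gamma_6$.
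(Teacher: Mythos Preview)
Your overall strategy matches the paper's: track rounding errors to show that disagreement between $\rt$ and $\widehat{\rt}$ forces $|\orient(p,u,q)|$ to be small, then convert to a bound on $d(u,pq)$ via $|\orient(p,u,q)| = \|p-q\|\, d(u,pq)$. The gap is exactly where you flag it, and your proposed resolution does not work. The claim ``$u$ lies in the slab between $p$ and $q$ so $|u_x - p_x| \le |q_x - p_x|$'' is not available: the lemma is stated for an arbitrary triple $p,u,q$, and in the recursive calls $u$ need not lie between $p$ and $q$ in either coordinate. Without that assumption, your route via $|A|,|B| \le 4M^2$ only yields $d(u,pq) \lesssim \epsilon M^2/\|p-q\|$, which can be arbitrarily large when $p$ and $q$ are close.

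The cancellation you need is simpler and requires no hypothesis on $u$: it is the factors involving $p$ and $q$ alone --- namely $|q_x - p_x|$ and $|q_y - p_y|$ --- that are each trivially bounded by $\|p-q\|$, since they are components of the vector $q-p$. The paper exploits this by moving all six rounding factors to one side of the comparison, obtaining $|\orient(p,u,q)| \le \gamma_6\,|(p_y - u_y)(q_x - p_x)|$ directly; then $|q_x - p_x| \le \|p-q\|$ cancels the denominator and the remaining factor $|p_y - u_y|$ is bounded by $2M$. Your own estimate $|A-B| \le \gamma_3(|A|+|B|)$ can be salvaged the same way, since $|A|,|B| \le 2M\,\|p-q\|$, at the cost of a slightly worse constant. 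Note also that the rewritten test is a direct floating-point comparison $\widehat A > \widehat B$ with no final subtraction, so there are exactly three rounding factors per side; the $6$ in $\gamma_6$ arises from collecting the two triples into a single ratio and applying Higham's Lemma~3.1, not from ``doubling for the two products and the final subtraction''.
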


\begin{proof}
    As $\orient(p, u, q)$ is equal to the parallelogram spanned by
    $\vec{up}$ and $\vec{uq}$, it is equal to twice the area of triangle
    $puq$. This gives an upper bound
    \begin{align*}
        d(u, pq) \leq \frac{|\orient(p, u, q)|}{2\lVert p - q \rVert},
    \end{align*}
    so we start by showing that a conflict between $\widehat{\rt}$ and $\rt$ 
    implies small $|\orient(p, u, q)|$.

    Suppose $\rt(p, u, q)$ is true, but $\widehat{\rt}(p, u, q)$ is false.
    We write $fl(\cdots)$ for the floating point evaluation of a primitive 
    operation. As the IEEE-754 standard satisfies
    $fl(x \circ y) = (1 + \delta)(x \circ y)$, $|\delta| < \epsilon$ for
    $\circ \in \{+, -, \cdot\}$, the test
    \begin{align*}
        \widehat{\rt}(p, u, q) \equiv 
        fl(fl(p_x - u_x) fl(q_y - p_y)) < fl(fl(p_y - u_y) fl(q_x - p_x))
    \end{align*}
    implies 
    \begin{align}\label{eq:rt:1}
    (p_x - u_x) (q_y - p_y) <& \frac{(1 + \delta_1)(1 + \delta_2)(1 + \delta_3)}
    {(1 + \delta_4)(1 + \delta_5)(1 + \delta_6)} (p_y - u_y) (q_x - p_x).
    \end{align}
    We can simplify this inequality by noting there exists a $\theta$ such that
    \begin{align*}
    \frac{(1 + \delta_1)(1 + \delta_2)(1 + \delta_3)}{(1 + \delta_4)(1 + \delta_5)(1 + \delta_6)} = 1 + \theta
    \end{align*}
    and $|\theta| \leq \gamma_6$ \cite[Lemma~3.1]{Higham02}.
    That yields
    \begin{align*}
        (p_x - u_x) (q_y - p_y) &< (1 + \theta) (p_y - u_y) (q_x - p_x) \equiv \\
        -\theta (p_y - u_y) (q_x - p_x) &< \orient(p, u, q).
    \end{align*}
    As $\rt(p, u, q)$ is false, we have $\orient(p, u, q) \leq 0$, so
    $|\orient(p, u, q)| \leq \gamma_6 |(p_y - u_y) (q_x - p_x)|$. An analogous
    argument shows this also holds for $\rt(p, u, q)$ true but 
    $\hat{\rt}(p, u, q)$ false.

    We can now compute an upper bound
    \begin{align*}
        d(u, pq) &= \frac{|\orient(p, u, q)|}{2 \lVert p - q \rVert} \\
        &\leq \frac{\gamma_6 |(p_y - u_y) (q_x - p_x)|}{2 \sqrt{(q_x - p_x)^2 + (q_y - p_y)^2}} \\
        &\leq \frac{1}{2}\gamma_6 |p_y - u_y| \\
        &\leq \frac{1}{2} \gamma_6 (|p_y| + |u_y|) \\
        &\leq \gamma_6 M.
    \end{align*}
\end{proof}

\subsection{Distance Test}\label{subsec:farther}

Lemma~\ref{lem:farther} shows that at least the building blocks of the 
reductions---individual tests $\widehat{\frt}$---are accurate.

\begin{lemma}\label{lem:farther}
    If $\widehat{\frt}(p, q, u, u')$ is true, but $\frt(p, q, u, u')$ is false,
    then $0 \leq d(u', pq) - d(u, pq) \leq \gamma_6 M$.
\end{lemma}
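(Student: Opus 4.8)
The plan is to follow the template of the proof of \cref{lem:right-turn}: turn a disagreement between the exact and the computed test into a bound on a small orientation-like quantity, and then convert that into a distance bound using the identity $d(u,pq) = |\orient(p,u,q)|/(2\lVert p-q\rVert)$ established there. The left inequality needs no computation at all: $\frt(p,q,u,u')$ is by definition the predicate $d(pq,u) > d(pq,u')$, so if it is false then $d(pq,u) \le d(pq,u')$, i.e.\ $0 \le d(u',pq) - d(u,pq)$.

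For the right inequality I would first record the exact algebraic identity
\[
  \orient(p,u',q) - \orient(p,u,q) \;=\; (q_y - p_y)(u_x - u'_x) - (q_x - p_x)(u_y - u'_y),
\]
which is just the expansion of the two orientations and is precisely the rewriting used to define $\frt$. Abbreviate $L := (q_y - p_y)(u_x - u'_x)$ and $R := (q_x - p_x)(u_y - u'_y)$, so the right-hand side is $L-R$. Now split on the sign of $\orient(p,u',q)$. If $\orient(p,u',q) \le 0$, then $u'$ is misclassified by $\widehat{\rt}$ relative to $pq$ (it lies in the relevant $S_i$, so $\widehat{\rt}$ accepted it, yet its true orientation is non-positive), and \cref{lem:right-turn} gives $d(u',pq) \le \gamma_6 M$, whence $d(u',pq) - d(u,pq) \le d(u',pq) \le \gamma_6 M$ and we are done. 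Otherwise $\orient(p,u',q) > 0$, and then, whatever the sign of $\orient(p,u,q)$, the hypothesis that $\frt$ is false — i.e.\ $|\orient(p,u,q)| \le \orient(p,u',q)$ — yields $0 \le |\orient(p,u',q)| - |\orient(p,u,q)| \le \orient(p,u',q) - \orient(p,u,q) = L-R$, with equality when $\orient(p,u,q) \ge 0$.

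It then remains to bound $L-R$ from $\widehat{\frt}$. By the IEEE-754 model, exactly as in \cref{lem:right-turn}, ``$\widehat{\frt}(p,q,u,u')$ true'' means $(1+\delta_1)(1+\delta_2)(1+\delta_3)\,L < (1+\delta_4)(1+\delta_5)(1+\delta_6)\,R$ with each $|\delta_i| < \epsilon$ (two subtractions and one multiplication per side). Dividing through by the positive left-hand rounding factors and writing the quotient of the two products as $1+\theta$ with $|\theta| \le \gamma_6$ \cite[Lemma~3.1]{Higham02} gives $L < (1+\theta)R$, hence $L - R < \theta R$; since $L-R \ge 0$ this forces $\theta R > 0$ and so $L - R < |\theta|\,|R| \le \gamma_6\,|R|$. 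Using $|q_x - p_x| \le \lVert p-q\rVert$ and $|u_y|,|u'_y| \le M$, and chaining the estimates,
\[
  d(u',pq) - d(u,pq) \;=\; \frac{|\orient(p,u',q)| - |\orient(p,u,q)|}{2\lVert p-q\rVert} \;\le\; \frac{L - R}{2\lVert p-q\rVert} \;\le\; \frac{\gamma_6\,|q_x - p_x|\,|u_y - u'_y|}{2\lVert p-q\rVert} \;\le\; \tfrac{1}{2}\gamma_6\bigl(|u_y| + |u'_y|\bigr) \;\le\; \gamma_6 M.
\]

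The step I expect to be fiddliest is precisely this sign and case bookkeeping: verifying that ``$\frt$ false'' together with the reformulation really delivers $|\orient(p,u',q)| - |\orient(p,u,q)| \le L-R$ even when the compared points need not genuinely lie to the left of $pq$, that dividing the floating-point inequality does not reverse it, and that the degenerate cases ($\orient(p,u',q)\le 0$, or a vanishing $\lVert p-q\rVert$ or coordinate difference) are absorbed. The actual analytic content is light — it is the reuse of \cref{lem:right-turn}, the exact identity for $\orient(p,u',q) - \orient(p,u,q)$, and a single application of the standard IEEE-754/$\gamma$-notation estimate.
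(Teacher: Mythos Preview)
Your argument is correct and follows essentially the same route as the paper: derive $L - R < \theta R$ with $|\theta|\le\gamma_6$ from the IEEE-754 model applied to the rewritten test, then convert to a distance bound via $d(\cdot,pq)=|\orient(p,\cdot,q)|/(2\lVert p-q\rVert)$ and the estimate $|q_x-p_x|\le\lVert p-q\rVert$. The one difference is that you add a case split on the sign of $\orient(p,u',q)$ and handle $|\orient|$ rather than $\orient$ directly; the paper's Equation~\eqref{eq:frt2} silently assumes both orientations are non-negative (reasonable in context, since $u,u'$ come from an $S_i$ already filtered by $\widehat{\rt}$), whereas you make that assumption explicit by invoking \cref{lem:right-turn} when it fails for $u'$ --- just note that this appeal relies on the contextual fact that $u'$ was accepted by $\widehat{\rt}$, which is how the lemma is used but not part of its statement.
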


\begin{proof}
    Similarly to Lemma~\ref{lem:right-turn} we assume 
    $\widehat{\frt}(p, q, u, u')$ true, but $\frt(p, q, u, u')$ false.
    This gives the two inequalities
    \begin{align*}
        (q_y - p_y) (u_x - u'_x) - (q_x - p_x) (u_y - u'_y) &< 
            \theta(q_x - p_x) (u_y - u'_y), \\
        (q_y - p_y) (u_x - u'_x) - (q_x - p_x) (u_y - u'_y) &\geq 0,
    \end{align*}
    for some $|\theta| \leq \gamma_6$, which we can combine to 
    \begin{align}\label{eq:frt1}
        |(q_y - p_y) (u_x - u'_x) - (q_x - p_x) (u_y - u'_y)| \leq 
            |\theta(q_x - p_x) (u_y - u'_y)|.
    \end{align}

    Looking at the left side from a geometric point of view gives

    \begin{equation}\label{eq:frt2}
    \begin{aligned}
        (q_y - p_y) (u_x - u'_x) - (q_x - p_x) (u_y - u'_y) &
        = \orient(p, u, q) - \orient(p, u', q) \\
        & = 2 d(u, pq) \lVert p - q \rVert - 2 d(u', pq) \lVert p - q \rVert \\
        & = 2 \lVert p - q \rVert (d(u, pq) - d(u', pq)).
    \end{aligned}
    \end{equation}

    Combining the bound of Equation~\ref{eq:frt1} with the geometric view of
    Equation~\ref{eq:frt2} gives
    \begin{align*}
        d(u, pq) - d(u', pq) & \leq
       |\theta|\frac{|(q_x - p_x) (u_y - u'_y)|}{2 \lVert p - q \rVert} \\
        & \leq \frac{1}{2}|\theta| |u_y - u'_y| \leq |\theta|M.
    \end{align*}
\end{proof}

The accuracy combining these $\widehat{\frt}$ tests depends on the evaluation
order. We analyse three orders inspired by summation. The three
algorithms~\ref{alg:farthest1},~\ref{alg:farthest2},~\ref{alg:farthest3}
resemble normal summation, blocked summation, and pairwise summation
respectively. Lemma~\ref{lem:farther_reduc} gives bounds for these algorithms.

\begin{algorithm}[ht]
\begin{algorithmic}[1]
    \Require $P: $ ordered set of points in $\mathbb{R}^2$, 
            $p, q \in \mathbb{R}^2$.
    \Ensure $u_{\max} \in P$ farthest from $pq$ with positive 
            $\orient(p, r, q)$.
    \State $u_{\max} = P[0]$;
    \For{$i = 1$ \textbf{to} $|P| - 1$}
        \If{$\widehat{\frt}(p, q, P[i], u_{\max})$}
            \State $u_{\max} = P[i]$;
        \EndIf
    \EndFor
\end{algorithmic}
    \captionof{algorithm}{Simple way of finding farthest point from $pq$.}
\label{alg:farthest1}
\end{algorithm}

\begin{algorithm}[ht]
\begin{algorithmic}[1]
    \Require $P: $ ordered set of points in $\mathbb{R}^2$, 
            $p, q \in \mathbb{R}^2$.
    \Ensure $u_{\max} \in P$ farthest from $pq$ with positive 
            $\orient(p, r, q)$.
    \State $u_{\max} = P[0]$;
    \For{$I = 1$ \textbf{to} $|P| - 1$ \textbf{step} $m$}
        \State $b_{\max} = P[I]$;
        \For{$i = I + 1$ \textbf{to} $\min(I + m - 1, |P| - 1)$}
            \If{$\widehat{\frt}(p, q, P[i], b_{\max})$}
                \State $b_{\max} = P[i]$;
            \EndIf
        \EndFor
        \If{$\widehat{\frt}(p, q, b_{\max}, u_{\max})$}
            \State $u_{\max} = b_{\max}$
        \EndIf
    \EndFor
\end{algorithmic}
    \captionof{algorithm}{Finding farthest point from $pq$ with blocking.}
\label{alg:farthest2}
\end{algorithm}

\begin{algorithm}[ht]
\begin{algorithmic}[1]
    \Require $P: $ ordered set of points in $\mathbb{R}^2$, 
            $p, q \in \mathbb{R}^2$.
    \Ensure $u_{\max} \in P$ farthest from $pq$ with positive 
            $\orient(p, r, q)$.
    \Function{Farthest}{P, p, q}
        \If{$|P| = 1$}
            \State $u_{\max} = P[0]$.
        \Else{}
            \State $u^{1}_{\max} = \Call{Farthest}{P(1 : |P| / 2 - 1), p, q}$.
            \State $u^{2}_{\max} = \Call{Farthest}{P(|P| / 2 : |P| - 1), p, q}$.
            \If{$\widehat{\frt}(p, q, u^{1}_{\max}, u^{2}_{\max})$}
                \State $u_{\max} = u^{1}_{\max}$.
            \Else{}
                \State $u_{\max} = u^{2}_{\max}$.
            \EndIf
        \EndIf
    \EndFunction
\end{algorithmic}
\captionof{algorithm}{Finding farthest point from $pq$ with recursion.}
\label{alg:farthest3}
\end{algorithm}

\begin{lemma}\label{lem:farther_reduc}
    Let $r$ be the farthest point from $pq$ out of $n$ points and 
    $\widehat{r}$ a computed estimate of this point.
    We write 
    \[
        F_n := \frac{d(r, pq) - d(\widehat{r}, pq)}{M}.
    \]
    for the scaled error. 
    Algorithms~\ref{alg:farthest1},~\ref{alg:farthest2},~\ref{alg:farthest3}
    have asymptotic error bounds $O(n\epsilon)$, $O((m + n / m) \epsilon)$,
    $O(\log(n)\epsilon)$ on $F_n$ respectively.
\end{lemma}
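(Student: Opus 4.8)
The plan is to bound the accumulated error for each evaluation order separately by thinking of the reduction as a sequence of pairwise $\widehat{\frt}$ comparisons, and observing that each ``wrong'' comparison loses at most $\gamma_6 M$ of distance by Lemma~\ref{lem:farther}, while a ``correct'' comparison loses nothing. The key point is that the error incurred along a single root-to-leaf path in the comparison tree is the number of comparisons on that path times $\gamma_6 M$, and $\gamma_6 = 6\epsilon/(1-6\epsilon) = O(\epsilon)$. So the problem reduces to counting the worst-case number of comparisons between the computed winner $\widehat r$ and the true farthest point $r$, i.e.\ the depth of the comparison structure.

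First I would set up the core telescoping argument: suppose the running estimate passes through points $w_0, w_1, \dots, w_k = \widehat r$, where at each step $w_{j}$ is the element that ``won'' comparison $j$. At each comparison, if $\widehat{\frt}$ agreed with $\frt$ we kept the genuinely-farther point, and if it disagreed, Lemma~\ref{lem:farther} says $d(\text{farther}, pq) - d(\text{kept}, pq) \le \gamma_6 M$. Hence $d(r, pq) - d(w_k, pq) \le \ell \gamma_6 M$, where $\ell$ is the number of comparisons on the path from wherever $r$ ``entered'' the reduction to the final output. I would then instantiate $\ell$ for each algorithm. For Algorithm~\ref{alg:farthest1}, the chain $u_{\max}$ is updated up to $n-1$ times and $r$ may be compared away at any of them, so $\ell \le n-1$ and $F_n = O(n\epsilon)$. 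For Algorithm~\ref{alg:farthest2}, $r$ lies in some block; within its block it survives at most $m-1$ comparisons, then its block champion $b_{\max}$ is folded into $u_{\max}$ across at most $n/m$ block boundaries, so $\ell \le (m-1) + n/m = O(m + n/m)$. For Algorithm~\ref{alg:farthest3}, the recursion tree on $n$ points has depth $\lceil \log_2 n \rceil$, and $r$ participates in exactly one comparison per level on the path from its leaf to the root, so $\ell \le \lceil \log_2 n\rceil$ and $F_n = O(\log(n)\epsilon)$.

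The one subtlety I would be careful about is that the error does not compound multiplicatively: it is tempting to worry that an early misclassification ``contaminates'' later comparisons, but Lemma~\ref{lem:farther} is stated about the \emph{true} distances $d(\cdot,pq)$ of the two points being compared, independent of history, so the losses simply add. Concretely, if at comparison $j$ we keep $w_j$ over the incoming point $v_j$, then $d(v_j, pq) - d(w_j, pq) \le \gamma_6 M$ whenever the predicate disagreed with the truth (and $\le 0$ otherwise); chaining and using $d(r,pq) \le \max$ over the first point on $r$'s path gives the telescoped bound with no cross terms. I would also note $\gamma_6 \le 7\epsilon$ for $\epsilon$ small, so each constant absorbs cleanly into the $O(\cdot)$.

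The main obstacle is purely bookkeeping: making the ``path length'' argument precise for the blocked and recursive variants, i.e.\ pinning down exactly how many comparisons the true farthest point $r$ (or the block/subtree champion that dominates it) is involved in before reaching the output. For Algorithm~\ref{alg:farthest2} this means carefully splitting into the intra-block phase and the inter-block phase and checking the boundary index arithmetic; for Algorithm~\ref{alg:farthest3} it means a one-line induction on $|P|$ that the returned point's distance deficit is at most $\lceil \log_2 |P| \rceil \gamma_6 M$ — the induction step is exactly one application of Lemma~\ref{lem:farther} at the top-level merge plus the inductive hypothesis on the half containing $r$. None of these steps is deep; the content is entirely in the reduction to counting comparisons, which Lemma~\ref{lem:farther} makes immediate.
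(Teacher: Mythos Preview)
Your proposal is correct and follows essentially the same approach as the paper: both reduce the problem to observing (via Lemma~\ref{lem:farther}) that each wrong $\widehat{\frt}$ decision increases the deficit $d(r,pq)-d(\text{current guess},pq)$ by at most $\gamma_6 M$, and then bound the number of comparisons on the path from $r$ to the output by $n-1$, $(m-1)+n/m$, and $\lceil\log_2 n\rceil$ for the three algorithms respectively. Your write-up is in fact more explicit than the paper's (which dispatches each case in one sentence), but the underlying telescoping/induction argument is identical.
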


\begin{proof}
    We first show how the error propagates through chains of incorrect 
    decisions. We then prove the three cases by bounding the length of such
    a chain. Suppose $\widehat{\frt}(p, q, u, u')$ is evaluated correctly.
    Then $d(u, pq) - d(u', pq) > 0$, so 
    $d(r, pq) - d(u, pq) < d(r, pq) - d(u', pq)$, implying $u$ is a better guess 
    than $u'$. If $\widehat{\frt}(p, q, u, u')$ is evaluated incorrectly, then
    Lemma~\ref{lem:farther} shows $d(u', pq) - d(u, pq) < \gamma_6 M$, or
    $d(r, pq) - d(u, pq) < d(r, pq) - d(u', pq) + \gamma_6 M$. So $u'$ makes
    the guess at most $\gamma_6 M$ worse.

    The first bound follows directly from doing at most $n$ comparisons and
    $\gamma_6 \in O(\epsilon)$. If we do this within a group of size $m$, we may
    end up with an error of $O(m \epsilon)$. We then compare the farthest point
    within each of these $n / m$ groups, so the final error may be 
    $O((m + n / m) \epsilon)$.
    The final bound follows from the recursion having depth 
    $\lceil \log_2 n \rceil$ and an induction.
\end{proof}

Example~\ref{counterexample} shows we can get close to the worst-case bound 
$O(n \epsilon)$ even when assuming the rounding is random, as long as we can 
construct an adversarial input.

\begin{example}\label{counterexample}
    Consider the points in Figure~\ref{fig:counter} where 
    $d(u_{i}, pq) - d(u_{i + 1}, pq) \approx \gamma_6 M / k$ for some
    fixed $k$.
    We will keep updating our guess of the farthest point unless we
    evaluate $\widehat{\frt}$ correctly $k$ times in a row. The 
    probability of this happening is exponential in $k$, so we do not have to 
    choose $k$ large relative to $n$ to likely end up with a point $u_i$ close 
    to $u_n$.
    This point satisfies
    \[
        \frac{d(u_1, pq) - d(u_i, pq)}{M} \in O(n\epsilon).
    \]

    \begin{figure}
    \resizebox{0.5 \columnwidth}{!}{%
        \begin{tikzpicture}
            \node (p) at (0, 0) {$p$};
            \node (q) at (5, 5) {$q$};
            \foreach \i in {1, ..., 10} {
                \node at (0 - \i / 6, 0 + \i / 5) {$.$};
            }
            \foreach \i in {1, ..., 10} {
                \node at (5 - \i / 6, 5 + \i / 5) {$.$};
            }
            \node at (0 - 11 / 6, 0 + 11 / 5) {$u_1$};
            \node at (0 - 5 / 6, 0 + 5 / 5) {$u_{n / 2}$};
            \node at (5 - 11 / 6, 5 + 11 / 5) {$u_2$};
            \node at (5 - 5 / 6, 5 + 5 / 5) {$u_{n / 2 + 1}$};
            \graph {(p) -- (q)};
        \end{tikzpicture}
    }
    \caption{Adversarial input}\label{fig:counter}
    \end{figure}
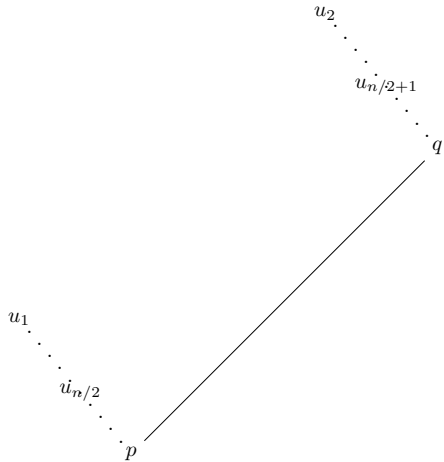
\end{example}

Though the structure of the points is not especially adversarial, the
order of points is. We hypothesize that even Algorithm~\ref{alg:farthest1}
will be close to optimal, if the order of points is random.

\begin{hyp}\label{hyp:farther_reduc_random}
    Under random permutations, Algorithm~\ref{alg:farthest1} gives
    $\mathbb{E}[F_{|P|}] \in O(\epsilon)$.
\end{hyp}

A further investigation of $F_{|P|}$ will show this hypothesis is related
to the expected length of subsequences, making it a more complicated variation 
of Ulam's Problem~\cite{Romik15} and therefore outside the scope of this paper.
Instead, we will support Hypothesis~\ref{hyp:farther_reduc_random} with a 
Monte Carlo method.

Let $u_1$, $\cdots$, $u_{|P|}$ be ordered by distance from $pq$.
We write $u \sim u'$ for the relation $|d(u, pq) - d(u', pq)| / M < \gamma_6$,
e.g. we cannot tell at our precision whether $u$ or $u'$ is farther from $pq$.
If we make the wrong decision every time we evaluate $\widehat{frt}(u, u', p, q)$,
a permutation $\sigma$ will end up with $u_{\sigma(|P| - m)}$ for the largest 
subsequence $u_{\sigma(|P|)} \sim u_{\sigma(|P| - 1)} \sim \cdots 
\sim u_{\sigma(|P| - m)}$. We have an upper bound of $m\gamma_6$ on $F_{|P|}$. 

The more points that are pairwise related, the larger the expected length of
such a subsequence. However, for such a subsequence $u_{i + k - 1}, \cdots, u_i$
satisfying $u_{i + k - 1} \sim u_i$, each wrong guess in a subsequence only 
increases $F_n$ by $\gamma_6 / k$ on average. To capture this effect, we create 
subsequences for different values of $k$, similar to the construction of 
Example~\ref{counterexample}.
We let $|P|$ range from $256$ to $2^{20}$, $k$ from $1$ to $10$, take $300$ 
samples and repeat each experiment $10$ times, reporting
error bars in Figure~\ref{fig:monte_carlo}.
We see that $\mathbb{E}[F_{|P|}]$ appears to be constant in $k$ and
$|P|$. The code for reproducing this experiment is available 
at~\cite{quickhull-correctness-artefact}.

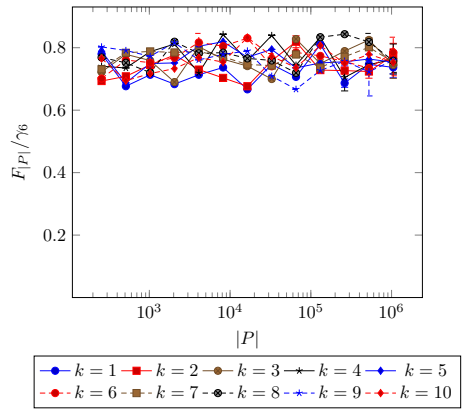
\begin{figure}[ht]
    \begin{adjustbox}{width=0.5\textwidth}
        \begin{tikzpicture}
            \begin{semilogxaxis}[
                xlabel=$|P|$,
                ylabel=$F_{|P|} / \gamma_6$,
                legend style={at={(0.5,-0.2)},anchor=north},
                legend columns = 5,
                xticklabel style={/pgf/number format/fixed},
                yticklabel style={/pgf/number format/fixed},
                ymin=0,
                ytick={0.2, 0.4, 0.6, 0.8, 1.0},
                yticklabels={0.2, 0.4, 0.6, 0.8, 1.0},
            ]


                \addplot+ [
                    error bars/.cd,
                        y explicit,
                        y dir=both,
                    ] table [
                        x = n,
                        y = mean,
                        y error plus = stddev,
                        y error minus = stddev,
                    ] {
                            n, mean, stddev
                            256, 0.780000,0.000000
                            512, 0.676667,0.000000
                            1024, 0.713333,0.000000
                            2048, 0.683333,0.000000
                            4096, 0.713333,0.000000
                            8192, 0.736667,0.000000
                            16384, 0.666667,0.000000
                            32768, 0.746667,0.000000
                            65536, 0.706667,0.000000
                            131072, 0.810000,0.000000
                            262144, 0.686667,0.013333
                            524288, 0.746667,0.032660
                            1048576, 0.737000,0.033248
                    };

                \addplot+ [
                    error bars/.cd,
                        y explicit,
                        y dir=both,
                    ] table [
                        x = n,
                        y = mean,
                        y error plus = stddev,
                        y error minus = stddev,
                    ] {
                        n, mean, stddev
                        256, 0.693333,0.000000
                        512, 0.708333,0.000000
                        1024, 0.745000,0.000000
                        2048, 0.776667,0.000000
                        4096, 0.730000,0.000000
                        8192, 0.703333,0.000000
                        16384, 0.676667,0.000000
                        32768, 0.758333,0.000000
                        65536, 0.819167,0.019094
                        131072, 0.728333,0.000000
                        262144, 0.726667,0.000000
                        524288, 0.723833,0.021422
                        1048576, 0.775333,0.038700
                    };

                \addplot+ [
                    error bars/.cd,
                        y explicit,
                        y dir=both,
                    ] table [
                        x = n,
                        y = mean,
                        y error plus = stddev,
                        y error minus = stddev,
                    ] {
n, mean, stddev
256, 0.722222,0.000000
512, 0.778885,0.000000
1024, 0.758890,0.000000
2048, 0.689983,0.000000
4096, 0.803335,0.000000
8192, 0.763272,0.000000
16384, 0.741110,0.000000
32768, 0.699733,0.000000
65536, 0.827800,0.000000
131072, 0.749043,0.012246
262144, 0.789167,0.000000
524288, 0.824833,0.006688
1048576, 0.744667,0.042838
};
                \addplot+ [
                    error bars/.cd,
                        y explicit,
                        y dir=both,
                    ] table [
                        x = n,
                        y = mean,
                        y error plus = stddev,
                        y error minus = stddev,
                    ] {
n, mean, stddev
256, 0.740833,0.000000
512, 0.735000,0.000000
1024, 0.787500,0.000000
2048, 0.811667,0.000000
4096, 0.719167,0.000000
8192, 0.842500,0.000000
16384, 0.758333,0.000000
32768, 0.839167,0.000000
65536, 0.744167,0.000000
131072, 0.833333,0.000000
262144, 0.705000,0.043333
524288, 0.745500,0.012656
1048576, 0.763750,0.048187
};

                \addplot+ [
                    error bars/.cd,
                        y explicit,
                        y dir=both,
                    ] table [
                        x = n,
                        y = mean,
                        y error plus = stddev,
                        y error minus = stddev,
                    ] {
n, mean, stddev
256, 0.788667,0.000000
512, 0.687331,0.000000
1024, 0.749340,0.000000
2048, 0.751327,0.000000
4096, 0.804667,0.000000
8192, 0.819291,0.000000
16384, 0.769433,0.000000
32768, 0.794577,0.000000
65536, 0.735368,0.000000
131072, 0.751315,0.000000
262144, 0.754893,0.013163
524288, 0.763760,0.014620
1048576, 0.753562,0.036820
};

                \addplot+ [
                    error bars/.cd,
                        y explicit,
                        y dir=both,
                    ] table [
                        x = n,
                        y = mean,
                        y error plus = stddev,
                        y error minus = stddev,
                    ] {
n, mean, stddev
256, 0.701111,0.000000
512, 0.760553,0.000000
1024, 0.752223,0.000000
2048, 0.767769,0.000000
4096, 0.817002,0.028849
8192, 0.804963,0.000000
16384, 0.830560,0.000000
32768, 0.740418,0.000000
65536, 0.785013,0.000000
131072, 0.773854,0.000000
262144, 0.755656,0.006937
524288, 0.734555,0.010585
1048576, 0.784760,0.048904
};

                \addplot+ [
                    error bars/.cd,
                        y explicit,
                        y dir=both,
                    ] table [
                        x = n,
                        y = mean,
                        y error plus = stddev,
                        y error minus = stddev,
                    ] {
n, mean, stddev
256, 0.730951,0.000000
512, 0.783810,0.000000
1024, 0.788092,0.000000
2048, 0.784751,0.000000
4096, 0.787907,0.029394
8192, 0.770919,0.000000
16384, 0.746102,0.000000
32768, 0.742869,0.000000
65536, 0.778311,0.000000
131072, 0.736413,0.000000
262144, 0.771026,0.016194
524288, 0.801281,0.009734
1048576, 0.753911,0.035809
};

                \addplot+ [
                    error bars/.cd,
                        y explicit,
                        y dir=both,
                    ] table [
                        x = n,
                        y = mean,
                        y error plus = stddev,
                        y error minus = stddev,
                    ] {
n, mean, stddev
256, 0.768750,0.000000
512, 0.751667,0.000000
1024, 0.720417,0.000000
2048, 0.819167,0.000000
4096, 0.782083,0.000000
8192, 0.784167,0.000000
16384, 0.765833,0.000000
32768, 0.759167,0.000000
65536, 0.719583,0.000000
131072, 0.833750,0.000000
262144, 0.843333,0.000000
524288, 0.818375,0.027459
1048576, 0.757792,0.036688
};

                \addplot+ [
                    error bars/.cd,
                        y explicit,
                        y dir=both,
                    ] table [
                        x = n,
                        y = mean,
                        y error plus = stddev,
                        y error minus = stddev,
                    ] {
n, mean, stddev
256, 0.802593,0.000000
512, 0.791112,0.000000
1024, 0.772590,0.000000
2048, 0.816295,0.000000
4096, 0.761112,0.000000
8192, 0.778493,0.000000
16384, 0.789671,0.000000
32768, 0.708927,0.000000
65536, 0.666882,0.000000
131072, 0.726953,0.000000
262144, 0.757113,0.033682
524288, 0.723203,0.077599
1048576, 0.759792,0.014314
};

                \addplot+ [
                    error bars/.cd,
                        y explicit,
                        y dir=both,
                    ] table [
                        x = n,
                        y = mean,
                        y error plus = stddev,
                        y error minus = stddev,
                    ] {
n, mean, stddev
256, 0.765667,0.000000
512, 0.694332,0.000000
1024, 0.717336,0.000000
2048, 0.732997,0.000000
4096, 0.768333,0.000000
8192, 0.753977,0.000000
16384, 0.832716,0.000000
32768, 0.772681,0.008802
65536, 0.736027,0.000000
131072, 0.806973,0.000000
262144, 0.749486,0.000000
524288, 0.778895,0.034229
1048576, 0.753318,0.043248
};

                \legend{$k = 1$, $k = 2$, $k = 3$, $k = 4$, $k = 5$, $k = 6$, 
                        $k = 7$, $k = 8$, $k = 9$, $k = 10$}
            \end{semilogxaxis}
        \end{tikzpicture}
    \end{adjustbox}
    \caption{Monte-Carlo simulation of $F_n$, using $\gamma_6$ as unit.}
    \label{fig:monte_carlo}
\end{figure}

\section{Numerical Stability of Quickhull}

Now that we understand the accuracy of the building blocks of 
Algorithm~\ref{alg:quickhull_float}, we are ready to find bounds on the forward
error. We can simplify the analysis by making the following observation.
The first step of the convex hull algorithm finds the left-most and right-most
points by using coordinate comparison, which is exact. After that the structure
is the same as \texttt{HULL}. For this reason we restrict the analysis to the
recursive function \texttt{HULL}.

\begin{theorem}
    The forward error of Quickhull is bounded by $2DF_{|P|}$,
    where $D$ is the depth of the recursion and $F_{|P|}$ is a bound of
    Lemma~\ref{lem:farther_reduc}.
\end{theorem}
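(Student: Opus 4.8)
The plan is to track how the error accumulates along a single root-to-leaf path in the recursion tree of \texttt{HULL}, and then argue that no point on the output hull can be ``worse'' than the sum of the errors committed along the path that produced it. Fix a vertex $h$ returned by \texttt{HULL}; it was chosen as the farthest point $r_i$ in some call at depth $\ell$, after having survived as an element of the subproblem point sets at depths $0, 1, \dots, \ell-1$. I would first make precise the invariant that at each recursive call \texttt{HULL}($P, p, r, q$), the true farthest point from $pr$ (resp.\ $rq$) that \emph{should} have been selected differs from the one actually selected by at most $F_{|P|} M$ in distance-to-the-line, by Lemma~\ref{lem:farther_reduc} applied to that call. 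The subtlety is that ``should have been selected'' is itself relative to a point set that may already be wrong because earlier $\widehat{\rt}$ tests misclassified borderline points; here Lemma~\ref{lem:right-turn} is the tool, since any point misclassified by $\widehat{\rt}(p, u, r)$ satisfies $d(u, pr) \le \gamma_6 M \le F_{|P|} M$, so dropping it or keeping it spuriously perturbs the attainable maximum distance by at most that much.

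Next I would set up the telescoping argument. Along the path to $h$, let the successive ``anchor lines'' be $\ell_0, \ell_1, \dots, \ell_\ell$ (each $\ell_j$ being one of the two segments $pr$ or $rq$ passed into the depth-$j$ call). The claim is that the computed hull vertex $h$ lies within Hausdorff distance $2\ell \cdot F_{|P|} M$ — or more crudely $2 D F_{|P|} M$ — of the true hull $CH(P)$, and symmetrically every true hull vertex is within that distance of the computed polygon. For the first direction, I would show by induction on recursion depth that if a call receives a line $\ell_j$ whose endpoints are within $\eta_j M$ of genuine hull vertices, then (a) the selected $r$ is within $(\eta_j + F_{|P|})M$ of a genuine hull vertex, and (b) the two lines handed to the children have endpoint-error at most $\eta_{j+1} M := (\eta_j + F_{|P|})M$. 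Unrolling gives $\eta_D \le D \cdot F_{|P|}$, and the factor $2$ absorbs the two sides of the Hausdorff distance (computed-into-true and true-into-computed), the latter using that any genuine hull vertex omitted by the algorithm must have been killed by a borderline $\widehat{\rt}$ or $\widehat{\frt}$ test and hence is within $\gamma_6 M \le F_{|P|}M$ of one of the accepted edges. Dividing by $M$ converts the Hausdorff bound into the $d_M$-error, giving the stated $2 D F_{|P|}$.

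The main obstacle I anticipate is (b): propagating the error in the \emph{anchor points} cleanly through the distance-test analysis. Lemmas~\ref{lem:right-turn} and~\ref{lem:farther} are stated for exact $p, q$ but perturbed arithmetic; here the $p, r, q$ fed to a deep call are themselves contaminated, so I need a perturbation-of-the-line version: if the true line through a nearby pair of genuine hull vertices differs from the computed anchor line by a small lateral displacement, how much can the identity of the ``farthest point'' shift, and how much can its distance-to-the-true-hull grow? I would handle this by the same area/orientation identity used in Lemma~\ref{lem:right-turn} — $d(u,\ell) = |\orient|/(2\|p-q\|)$ — noting that perturbing an endpoint of $\ell$ by $\eta_j M$ changes $\orient(p,u,q)$ by $O(\eta_j M \cdot \mathrm{diam})$ and hence changes each $d(u,\ell)$ by $O(\eta_j M)$; since the selection only cares about \emph{differences} of these distances, the perturbation shifts the argmax's true-hull distance by $O(\eta_j M)$, which is what the additive recurrence needs. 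I expect that once this ``line perturbation'' lemma is in hand the rest is a routine induction, and that the constant hidden in the factor $2$ is exactly what keeps the bound clean. One should also check the base case $|P|\le 1$, where \texttt{HULL} returns $P$ verbatim and contributes no error, and confirm that the union operations in the reconstruction of $H$ only concatenate — they do not introduce new points — so the Hausdorff bound for the whole output follows from the per-path bounds by taking the max over the (at most $2D$-deep) tree.
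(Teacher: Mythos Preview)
Your approach differs from the paper's, and the difference exposes a genuine gap. The paper does not track per-vertex errors along root-to-leaf paths. Instead, at each recursion level it constructs a perturbed input $\tilde{P}$ (shifting misclassified points by at most $\gamma_6 M$ perpendicular to the relevant edge, and nudging the selected $r$ outward by at most $F_{|P|}M$) so that every floating-point decision made at that level becomes \emph{exactly correct} for $\tilde{P}$. This converts the per-level inaccuracy into a backward error $d_M(P,\tilde{P})\le F_{|P|}$, which the perturbation analysis of Section~3 pushes forward to a Hausdorff error of the same size. Two such contributions per level (equations \eqref{eq:2} and \eqref{eq:4}), chained with the induction hypothesis \eqref{eq:1} via the triangle inequality, yield $2DF_{|P|}$.

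Your telescoping argument breaks at the invariant you try to maintain. First, a misdiagnosis: the anchors $p,r,q$ handed to a deep call are \emph{not} numerically contaminated---they are exact input points, selected rather than computed---so Lemmas~\ref{lem:right-turn} and~\ref{lem:farther} apply verbatim at every depth, and no ``line perturbation lemma'' is needed for that reason. What can go wrong is that they are the \emph{wrong} input points, and here your invariant ``the anchors lie within $\eta_j M$ of genuine hull vertices'' simply fails. Consider a hull with a long flat top edge from $(-1,1)$ to $(1,1)$ and a non-hull point $(0,1-\delta)$ just beneath it; if the reduction selects $(0,1-\delta)$ as farthest from the $x$-axis (which Lemma~\ref{lem:farther_reduc} permits once $\delta\le F_{|P|}M$), that point is within $\delta$ of the hull but at Euclidean distance $\approx 1$ from either hull vertex. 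Your recurrence $\eta_{j+1}=\eta_j+F_{|P|}$ cannot absorb this, and the line-perturbation estimate you sketch (bounding how $d(u,\ell)$ shifts when the endpoints of $\ell$ move) does not help, because the failure is not that the line moved slightly---it is that a point can be line-distance-close to the argmax yet Euclidean-far from it. The paper's backward-error construction sidesteps this completely: it never asks which hull vertex $r$ is near, only that moving $r$ by at most $F_{|P|}M$ makes it the true farthest point of a nearby input set, after which exact-arithmetic correctness of Quickhull takes over.
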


\begin{proofs}
    We fix each point where $\widehat{rt}$ or $\widehat{frt}$ made an incorrect
    decision by moving them a little bit, using Lemma~\ref{lem:right-turn} and 
    Lemma~\ref{lem:farther_reduc}. This yields $\tilde{P} \approx P$, 
    $\tilde{S}_1 \approx S_1$, and $\tilde{S}_2 \approx S_2$ such that
    $CH(\tilde{P} \cup \{p, q\}) = \{p\} \cup CH(\tilde{S}_1) \cup \{r\} \cup 
                     CH(\tilde{S}_2) \cup \{q\}$. We can push this backward 
    error $d_M(P, \tilde{P})$ forward using the perturbation analysis. Combining
    this with an inductive argument on the recursion yields the following chain
    of approximations.

    \begin{tikzcd}
        {\hat{H} = \{p\} \cup \hat{H}_1 \cup \{r\} \cup \hat{H}_2 \cup \{q\}}
        \arrow[d, leftrightarrow, "\approx\ \text{(induction hypothesis)}"] \\
        {\{p\} \cup CH(S_1) \cup \{r\} \cup CH(S_2) \cup \{q\}}
        \arrow[d, leftrightarrow, "\approx\ \text{(perturbation analysis)}"] \\
        {\{p\} \cup CH(\tilde{S}_1) \cup \{r\} \cup CH(\tilde{S}_2) \cup \{q\}}
        \arrow[d, leftrightarrow, "=\text{ (construction of }\tilde{P})"] \\
        {CH(\tilde{P} \cup \{p, q\})}
        \arrow[d, leftrightarrow, "\approx\ \text{(perturbation analysis)}"] \\
        {CH(P \cup \{p, q\})}
    \end{tikzcd}
\end{proofs}

We can make the $\approx$ rigorous with triangle inequalities, which yields
the following proof.

\begin{proof}
    We construct $\tilde{P}$ as follows.
    By Lemma~\ref{lem:right-turn}, all points that are classified incorrectly 
    have a distance of at most $\gamma_6 M$ to either $pr$ or $rq$. If 
    $u \in P$ has been incorrectly added $S_1$, we move it
    $\frac{1}{2}(d(u, pr) + \gamma_6 M)$ perpendicular to $pr$. This yields a 
    point $\hat{u}$ that satisfies $rt(p, \hat{u}, r)$ and 
    $d(u, \hat{u}) \leq \gamma_6 M$ by Lemma~\ref{lem:right-turn}. We do the same 
    for points that have been incorrectly added to $S_2$. Points that were 
    incorrectly discarded are moved to $pr$ (or equivalently $rq$). 
    If $r_1$ is not actually farthest from $pr$, we only have to move it at
    most $F_{|S_1|} / M$ farther from $pr$ to make it the farthest point.
    The points $p, r, q$ are not moved as
    $\widehat{rt}$ correctly classifies them. 
    This gives us $\tilde{S}_1, \tilde{S}_2 \subseteq \tilde{P}$ such 
    that $d_M(\tilde{S}_i, S_i), d_M(\tilde{P}, P) \leq F_{|S_i|}$ and
    $CH(\tilde{P} \cup \{p, q\}) = 
    \{p\} \cup CH(\tilde{S}_1) \cup \{r\} \cup CH(\tilde{S}_2) \cup \{q\}$.

    We now do our induction. The base case is trivial. The induction hypothesis
    gives us $\hat{H}_i$ such that
    $d_M(\hat{H}_i, CH(S_i)) \leq 2(D - 1) F_{|P|}$.
    As Hausdorff distance takes maxima, this implies that also

    \begin{equation}\label{eq:1}
        d_M(\{p\} \cup \hat{H}_1 \cup \{r\} \cup \hat{H}_2 \cup \{q\}, 
            \{p\} \cup CH(S_1) \cup \{r\} \cup CH(S_2) \cup \{q\})
            \leq 2(D - 1) F_{|P|}
    \end{equation}
    By our perturbation analysis $d_M(CH(\tilde{S}_i), CH(S_i)) \leq F_{|S_i|}$,
    so 
    \begin{equation}\label{eq:2}
    \begin{aligned}
        d_M(\{p\} \cup CH(S_1) \cup \{r\} \cup CH(S_2) \cup \{q\}), \\
            \{p\} \cup CH(\tilde{S}_1) \cup \{r\} \cup CH(\tilde{S}_2) \cup \{q\}) \\
        \leq \max(F_{|S_1|}, F_{|S_2|}) \leq F_{|P|}.
    \end{aligned}
    \end{equation}
    By construction of $\tilde{P}, \tilde{S}_1, \tilde{S}_2$, we have
    \begin{equation}\label{eq:3}
    CH(\tilde{P} \cup \{p, q\}) = 
    \{p\} \cup CH(\tilde{S}_1) \cup \{r\} \cup CH(\tilde{S}_2) \cup \{q\},
    \end{equation}
    and by perturbation analysis
    \begin{equation}\label{eq:4}
        d_M(CH(\tilde{P} \cup \{p, q\}), CH(P \cup \{p, q\}))  
        \leq \max(F_{|S_1|}, F_{|S_2|}) \leq F_{|P|}.
    \end{equation}
    The induction step now follows from combining equations 
    \ref{eq:1}, \ref{eq:2}, \ref{eq:3}, \ref{eq:4} using the triangle 
    inequality.
\end{proof}

\subsection{Discussion}

The worst-case error bound is poor, $O(|P|^2 \epsilon)$, if 
Algorithm~\ref{alg:farthest1} is used
for the reduction, and if the recursion has depth $O(|P|)$. However, a recursion
depth of $O(|P|)$ gives Quickhull a runtime complexity of $O(|P|^2)$, making
it useless on that input anyway. If the recursion is reasonably 
balanced, we have $D \in O(\log(|CH(P)|))$. 
Under Hypothesis~\ref{hyp:farther_reduc_random} we have also have a
far tighter expected bound $\mathbb{E}[F_n] \in O(\epsilon)$. Under these
assumptions, which we believe reasonable, the forward error bound will be
$O(\log(|CH(P)|)\epsilon)$.

If we do not want to assume Hypothesis~\ref{hyp:farther_reduc_random} holds, 
we can choose $m \approx \sqrt{n}$ in Algorithm~\ref{alg:farthest2} to reduce 
the bound to $O(\sqrt{|P|}D \epsilon)$. Algorithm~\ref{alg:farthest3} further 
reduces the bound to $O(\log|P|D \epsilon)$, but this significantly complicates 
the implementation. If the runtime is reasonable, so $D \in O(\log|P|)$, we get 
the bounds promised in the introduction.

We also note that a parallel implementation of Quickhull will lead to
Algorithm~\ref{alg:farthest2}, for $m$ the number of processors, so
parallelisation will improve, rather than worsen, the forward error bound.

\section{Related and Future Work}

Finding planar convex hulls is a well-studied problem, and there exist at 
least ten algorithms for solving it \cite{Graham72, Jarvis73, Eddy77, 
Preparata77, Bykat78, Akl78, Andrew79, Clarkson93, Barber96, Chan96}. 

As far as we are aware, robustness work of the convex hull problem has all been
done for variations of Graham Scan. Fortune~\cite{Fortune89} modifies 
Graham Scan to obtain an algorithm with forward error bound of $O(|P| \epsilon)$.
Assuming Quickhull has a runtime of $O(|P| \log |P|)$ we obtain slightly worse
error bound $O(|P| \log(|P|) \epsilon)$ when using the straightforward 
reduction algorithm, or a better one when using more complicated reduction 
algorithms. In practice, we expect an even
tighter error bound $O\left(\log(|P|)|\log(|CH(P)|) \epsilon\right)$. 
For Graham-Fortune
the bound will not be much tighter in practice, which we can see as follows. 
This algorithm first sorts the
points by $x$-coordinate and adds the points to a stack in that order. 
The stack is then popped until the new point makes a right turn with the two 
points underneath it. The error is linear in the number of successive pops. 
There are $|P| - |CH(P)|$ points are popped in total, so at least
\[
    \frac{|P|}{|CH(P)|} - 1
\]
points must be popped successively which is large for $|P| \gg |CH(P)|$.

Jaromczyk et al.~\cite{Jaromczyk94} modify Fortune's algorithm further to a 
variant that is backward stable with error bound $O(|P| \epsilon)$. Our 
perturbation analysis shows that this also implies a forward error bound of 
$O(|P| \epsilon)$. Quickhull does not necessarily have a small backward
error because the set we compute may not 
be convex, and therefore not the solution to any problem. This may be 
problematic when the hull we compute becomes the input of a different algorithm 
that expects convex input. In fact, even an output that is truly convex is 
problematic when we cannot test this at our level of accuracy. To that end
Li et al.~\cite{Li90} propose computing strongly convex hulls, where we reject
points if their adjacent sides are almost \qty{180}{\degree}. They provide an 
algorithm that can construct a strongly convex hull from an approximate one
in linear time, which means it can be efficiently combined with Quickhull. 

Jiang et al.~\cite{Jiang06} provide a perturbation analysis for the convex
hull problem. They do not scale by $M$ and consider solutions to have an 
infinite error if they are not a minimal representation. We have chosen to
scale because we want to be invariant to units. If a minimal representation
is desired, we believe a strongly convex hull should be extracted from 
Quickhull's output using Li's algorithm. This adds minimal $O(|CH(P)|)$ 
overhead and reduces the representation further.

The analysis of reduction algorithms is similar to that of summation 
\cite{Higham93}, except in our case some algorithms are uniformly better
than others.

\section{Conclusion}

We have defined a distance $d_M$ to measure the error between the convex hull
and a computed approximation. The perturbation analysis with respect to this
distance shows that the convex hull problem is well-conditioned.

We recommend implementing the geometric test 
$d(u, pq) > d(u', pq)$ as $(q_y - p_y)(u_x - u'_x) < (q_x - p_x)(u_y - u'_y)$
and show that this test is accurate. Error in the Quickhull algorithm
as a whole is accumulated through recursion and finding the point $r$ that
is farthest from a line segment $pq$. This gives a worst-case bound on the
forward error of $O(|P|^2 \epsilon)$, where $\epsilon$ is the machine precision.
However, we argue that Quickhull would never be used on data where the 
recursion is deeper than $O(\log|P|)$, as Fortune-Graham would have superior 
runtime. Similarly, we do not expect error to accumulate when finding the 
farthest point $r$ in practice. This gives a more realistic forward error of 
$O\left(\log(|CH(P)|)\epsilon\right)$.

Though we do not believe this is useful in practice, we do provide algorithmic
variations that reduce the worst-case bound. By using a different reduction
algorithm for finding $r$, we can reduce the error bound
to $O(\sqrt{|P|}D\epsilon)$ or $O(D\log(|P|)\epsilon)$, for $D$ the
recursion depth. The depth $D$ is at most $|P|$, and $O(\log|P|)$ for data
where the runtime complexity is competitive to Graham-Fortune.

\paragraph{Acknowledgements}

We would like to thank Laura Scarabosio for the helpful 
discussion on numerical robustness for giving feedback on a draft of this 
chapter.

\bibliographystyle{plain}
\bibliography{paper}

\end{document}